\documentclass[letterpaper]{article}
\usepackage{preamble} \usepackage{authblk}
\usepackage[ruled,vlined,linesnumbered]{algorithm2e}

\newcommand{\dep}{\operatorname{depth}}
\newcommand{\pare}{\operatorname{par}}

\title{A Simplified Analysis of the Good-Bad $3/2$-Approximation Algorithm for Some Minimum-Cost Graph Problems}
\author{Shayan Ranjbarzadeh}
\author{David P.\ Williamson}
\author{Hannane Yaghoubizade}
\affil{Cornell University, School of Operations Research and Information Engineering}
\date{}

\begin{document}
\maketitle

\begin{abstract}
In this paper, we consider an easy greedy approximation algorithm, the {\em good-bad algorithm}, introduced by \textcite{Couetoux11} for finding a minimum-cost set of edges such that every connected component has at least $k$ vertices. Cou\"etoux proves that the good-bad algorithm achieves a $3/2$-approximation for this problem. \textcite{DavisW12} extend this result to the more general problem of finding a minimum-cost edge set that contains at least one edge from every cut $S\subseteq V$ satisfying $h(S) = 1$ where $h:2^V \rightarrow \{0,1\}$ is downward monotone; that is, $h(S) = 1$ implies $h(T) = 1$ for every nonempty subset $T \subseteq S$. The original problem corresponds to $h(S) =1$ when $|S|<k$.  We give a simplified analysis of the good-bad algorithm for downward monotone functions.  
\end{abstract}

\section{Introduction}

Consider the following problem, which we will refer to as \textsc{Lower Capacitated Tree Partitioning}. Given an undirected graph $G=(V,E)$ with nonnegative edge costs $c(e) \geq 0$ for all $e \in E$, and a positive integer $k$, the goal is to find a minimum-cost edge set $F\subseteq E$ such that each connected component of $(V,F)$ has at least $k$ vertices.  If $k=2$, the problem is min-cost edge cover, and if $k=|V|$, it is minimum spanning tree; both of these problems are known to be polynomial-time solvable.  
However, for any other constant value of $k$, \textsc{Lower Capacitated Tree Partitioning} is known to be NP-hard: \textcite{ImielinskaKK93} establish this for $k\geq 4$, while \textcite{BazganCT11} prove it for $k=3$ and, more generally, show that the problem is APX-hard for every constant $k\geq 3$.

\textcite{ImielinskaKK93} give a \(2\)-approximation algorithm for this problem through a natural adaptation of Kruskal's algorithm \cite{Kruskal56}. As in Kruskal's algorithm, edges are considered in nondecreasing order of cost; however, an edge between two distinct components is selected only when at least one of the components has size less than $k$. The performance guarantee of the algorithm is tight.  Other 2-approximation algorithms based on edge deletion (instead of, or in addition to, edge insertion) are given by \textcite{LaszloM05,LaszloM06}, who also perform
some experimental comparisons.  

\textcite{GoemansW94} show that the algorithm of Imieli\'nska et al.\ can be extended to obtain a \(2\)-approximation algorithm for a broad class of graph problems characterized by a \emph{downward monotone} function \(h \colon 2^V \to \{0,1\}\)\footnote{In fact, \textcite{GoemansW94} consider the more general setting of integer-valued downward monotone functions \(h \colon 2^V \to \mathbb{Z}_{\geq 0}\), whereas we restrict our attention to binary-valued functions.}. Specifically, \(h\) is downward monotone if \(h(S)=1\) implies \(h(T)=1\) for every nonempty subset \(T \subseteq S\). A set of edges \(F\) is feasible for the problem defined by \(h\) if
\(
    \lvert F \cap \delta(S) \rvert \geq h(S)
\)
for every nontrivial subset \(S \subseteq V\), where \(\delta(S)\) is the set of edges with exactly one endpoint in \(S\). For example, the \textsc{Lower Capacitated Tree Partitioning} corresponds to the function \(h\) where \(h(S)=1\) if and only if \(\lvert S\rvert<k\).
For this larger class of problems, similar to before, the algorithm processes the edges in nondecreasing order of cost, and adds an edge whenever it connects two distinct components \(C\) and \(C'\) such that \(h(C)=1\) or \(h(C')=1\). The edge-deletion algorithm of \textcite{LaszloM05,LaszloM06} also provides a 2-approximation guarantee for these problems \cite{LaszloM08}. \textcite{GoemansW94} show that several graph problems can be modeled with downward monotone functions, including some location-design and location-routing problems; for example, they consider a problem in which every connected component not only must have at least $k$ vertices, but also must have an open depot from a subset $D \subseteq V$, where there is a cost $c(d)$ for opening the depot $d \in D$ to serve the component.  

\textcite{Couetoux11} gives a $3/2$-approximation algorithm for \textsc{Lower Capacitated Tree Partitioning}.
In any partial solution constructed by the algorithm, let us call a connected component {\em small} if it has fewer than $k$ vertices in it, and {\em big} otherwise.  We call an edge {\em good} if it joins two small components into a big component, and {\em bad} otherwise.  
The algorithm behaves similarly to the Imieli\'nska et al.\ algorithm in that it considers adding edges to the solution in nondecreasing order of cost.  
However, when it considers adding an edge $e$ to the solution at cost $c(e)$, it also considers all good edges $e'$ of cost $c(e') \leq 2c(e)$.  
If such a good edge exists, it adds the cheapest one; otherwise, it adds the edge $e$.  
The intuition for the algorithm is that we want to try to decrease the number of small components of a partial solution until there is no small components left. A good edge decreases this number by two, whereas a bad edge decreases it by only one. It is therefore reasonable to pay up to twice as much for a good edge as for a bad edge. We will refer to this algorithm as the {\em good-bad algorithm.} \textcite{DavisW12} show that the natural extension of the good-bad algorithm to downward monotone functions, in which a component $C$ is small if $h(C)=1$ and big if $h(C)=0$ also obtains a 3/2-approximation guarantee.\footnote{\textcite{CouetouxDW13} wrote a joint journal version of these results.}

In this paper, we provide a simpler analysis of the good-bad algorithm based on a straightforward charging argument. Cou\"etoux's analysis relies on a 
more intricate charging scheme, whereas Davis and Williamson use a primal-dual framework that compares the algorithm's cost directly with the optimal solution rather than with the value of a dual solution. We believe that our analysis is considerably simpler than these previous approaches. Moreover, its clean structure allows us to derive tighter guarantees in certain cases.

In the next section, we give a formal presentation of the good-bad algorithm, and in Section \ref{sec:analysis}, we show that it obtains a 3/2-approximation guarantee.

\section{Preliminaries}
For an undirected graph \(G=(V,E)\), a \emph{cut} is a nonempty proper subset \(S\subsetneq V\). For a subset of edges $F \subseteq E$, we write \(\delta_F(S)\) for the set of edges in $F$ crossing the cut $S$, namely, the edges in $F$ with exactly one endpoint in \(S\). When the edge set is omitted, we write \(\delta(S)\) as shorthand for \(\delta_E(S)\). A set function $h:2^V\to \{0,1\}$ is called \emph{downward monotone} if, whenever $h(S) = 1$ for some $S\subseteq V$, it also holds that $h(T) =1$ for every $T\subseteq S$. 
Given an undirected graph $G=(V, E)$, nonnegative edge-costs \(c \colon E \to \mathbb{R}_{\geq 0}\) and a prespecified downward monotone function \(h \colon 2^V \to \{0,1\}\), our goal is to (approximately) solve the following optimization problem
\begin{align}\label{eq:problem}
    \min_{F \subseteq E}
    \left\{
        \sum_{e \in F} c(e)
        \;\middle|\;
       \lvert\delta_F(S)\rvert \geq h(S)
        \text{ for every cut } S \subsetneq V, S\ne \emptyset
    \right\}.
\end{align}
    
\subsection{The Algorithm}
For a partial solution $F\subseteq E$, let us call a connected component $C$ of $G_F=(V, F)$ \emph{small} if $h(C) = 1$, and \emph{big} if $h(C) = 0$. 
For an edge \(e\in E\), let \(\Delta_F(e)\) denote the decrease in the number of small connected components resulting from adding \(e\) to \(F\).

An edge $e\in E$ is \emph{available} with respect to $F$, if $\Delta_F(e) \geq 1$, or equivalently, if its two endpoints belong to different connected components, at least one of which is small. 
An available edge \(e\) is called \emph{good} if \(\Delta_F(e)=2\), or equivalently, if it merges two small components into a big component. Otherwise, \(e\) is called \emph{bad}. A bad edge therefore either merges two small components into another small component or joins a small component and a big component, resulting in a big component.\footnote{By the downward monotonicity of \(h\), any set containing a big component must itself be big.} In either case, \(\Delta_F(e)=1\). Consequently,
\begin{align*}
    \Delta_F(e)= 
    \begin{cases}
        2 & \text{$e$ is good}, \\
        1 & \text{$e$ is bad}, \\
        0 & \text{$e$ is not available}.
    \end{cases}
\end{align*}

The good-bad algorithm starts from $F = \emptyset$, and iteratively adds available edges until no small connected components remain. At each iteration, given the current partial solution $F$, the algorithm selects an available edge $e\in E$ that minimizes the cost-to-benefit ratio ${c(e)}/{\Delta_F(e)}$. 
In other words, the algorithm selects the cheapest good edge \(e\) whenever ${c(e)}/{2} \le c(e')$ for every bad edge $e'$; otherwise, it selects the cheapest available bad edge. \cref{alg:good-bad} is a formal description of this procedure.

\begin{algorithm}[t]
\caption{Good-Bad Algorithm}
\label{alg:good-bad}

\KwIn{Graph $G = (V, E)$ along with edge costs $c:E \to \mathbb{R}_{\geq 0}$.} 
\KwOut{A subset of edges $F$ such that every connected component of \(G_F=(V,F)\) is big.}

$F \gets \emptyset$\\
$S \gets \left\{\{v\} \mid  v\in V, h(\{v\}) = 1\right\}$, the set of small components of $G_F=(V, \emptyset).$

\While{$|S| > 0$}{
    $e^*\gets \arg\min\left\{{c(e)}/{\Delta_F(e)}\mid e\in E, \Delta_F(e) > 0\right\}$, an available edge minimizing cost-to-benefit ratio.\\
    $F \gets F \cup \{e^*\}$\\
    $S \gets$ the set of small components of $G_F=(V, F)$.
}

\Return{$F$}
\end{algorithm}

We will denote the final output of good-bad algorithm by $\textsc{Alg} \subseteq E$. Note that \(\textsc{Alg} \) is feasible for our problem. To see this, consider any cut \(S \subsetneq V\). If $|\delta_\textsc{Alg} (S)| \geq 1$, then $|\delta_\textsc{Alg} (S)| \geq 1 \geq h(S),$
since \(h(S) \in \{0,1\}\). Otherwise, \(\delta_\textsc{Alg} (S)=\emptyset\), so \(S\) is a union of connected components of \((V,\textsc{Alg} )\). At termination, every connected component \(C\) is big and therefore satisfies \(h(C)=0\). Since \(C \subseteq S\), downward monotonicity implies that \(h(S)=0\). Hence, \(\lvert\delta_\textsc{Alg} (S) \rvert = 0 = h(S).\)
Thus, every cut requirement is satisfied, and \(\textsc{Alg} \) is feasible.

\section{Analysis of the Algorithm}
\label{sec:analysis}

Consider an optimal solution and let $T^*_1, \dots, T^*_k$ denote its connected components in an arbitrary order. Note that as $h(S) \in \{0,1\}$, each connected component is a tree.\footnote{Since the edge costs are nonnegative, we might have cycles including 0-cost edges, but such edges can be ignored w.l.o.g.} For each component $T^*_i$, let $e^*_i\in E(T^*_i)$ be an edge with maximum cost in $E(T^*_i)$, that is $e^*_i = \arg\max_{e\in E(T^*_i)} c(e)$. 
Writing \(e_i^*=\{u_i,v_i\}\), we view \(T_i^*\) as birooted at
\(u_i\) and \(v_i\), and denote the set of its roots by
\(R_i^*=\{u_i,v_i\}\). Birooted components were also used by
\textcite{DavisW12} in their analysis.
For each vertex $v \in V(T^*_i)$, we define its depth, denoted by $\dep(v)$, as its distance to the nearest root in $R^*_i$. For every $v \in V(T^*_i)\setminus R^*_i$ we use $\pare(v)$ to denote its parent in birooted $T^*_i$.

We next define a strict total order \(\prec\) on \(V\). We first order vertices according to the index of the component to which they belong and then, within each component, according to their depth. Specifically, for vertices
\(v \in V(T_i^*)\) and \(u \in V(T_j^*)\),
\(v \prec u\) if \(i<j\), or if \(i=j\) and
\(\dep(v)<\dep(u)\). Vertices belonging to the same component and having
the same depth are ordered according to an arbitrary fixed tie-breaking
rule.

For the analysis, we inductively maintain a unique \emph{representative}
for each small component of \(G_F=(V,F)\), while every other vertex has
already been \emph{eliminated} and assigned a price.
Initially, each vertex \(v\) with \(h(\{v\})=1\) is designated as the 
representative of the singleton component \(\{v\}\). Each vertex \(v\) with 
\(h(\{v\})=0\) is eliminated and assigned price \(p(v)=0\). Whenever the 
algorithm adds an edge \(e\) to \(F\), we update the representatives and 
prices according to the following scheme.
\begin{itemize}
    \item If \(e\) is a \emph{good} edge; let $C_1$ and $C_2$ be the small components $e$ connects and \(v_1\in V(C_1)\) and \(v_2\in V(C_2)\) be their unique representatives, respectively. We eliminate both representatives and assign each a price of \(c(e)/2\); that is, \( p({v_1})=p({v_2})={c(e)}/{2}\).
    In this case, merging \(C_1\) and \(C_2\) produces a big component, which has no representative.
    \item If $e$ is a \emph{bad} edge connecting a small component $C$ with the unique representative $v\in V(C)$ and a big component with no representative; we eliminate $v$ and assign it a price of $c(e)$, that is, $p(v) = c(e)$. Thus, the resulting big component has no representative.

    \item If $e$ is a \emph{bad} edge connecting two small components $C_1$ and $C_2$ with representatives $v_1\in V(C_1)$ and $v_2\in V(C_2)$; we eliminate the representative among $\{v_1, v_2\}$ that appears later in the order \(\prec\), and assign it a price of $c(e)$. The other vertex then will be the representative of the newly merged small component. For instance, if $v_2 \prec v_1$, we let $p(v_1) = c(e)$, eliminate $v_1$, and let $v_2$ be the represantative of the new component.
\end{itemize}
Observe that, upon termination of the algorithm, as every connected component is big, every vertex has been eliminated and assigned a price. Furthermore, whenever an edge \(e\) is added to the solution, a total price of \(c(e)\) is assigned to one or two representatives, which are then eliminated. Since an eliminated representative is never assigned another price, we will have  \(\sum_{e\in \textsc{Alg} } c(e)=\sum_{v\in V} p(v)\) at the end.

\begin{lemma} \label{representative}
    At every iteration of the algorithm, the representative of each small connected component \(C\) is the vertex $v\in V(C)$ satisfying \(v \prec u\) for all \(u \in V(C)\setminus\{v\}\).
\end{lemma}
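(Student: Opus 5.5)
We prove the statement by induction on the number of edges the algorithm has added. The invariant is: every small component $C$ of the current $G_F=(V,F)$ has exactly one representative, and that representative is the $\prec$-minimum vertex of $V(C)$. Since $\prec$ is a strict total order on $V$, the minimum of a nonempty finite set is well defined and unique, so the invariant makes sense.

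\textbf{Base case} ($F=\emptyset$). Every component is a singleton $\{v\}$. Such a component is small exactly when $h(\{v\})=1$, and in that case $v$ was designated its representative. The representative is trivially the $\prec$-minimum of $\{v\}$.

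\textbf{Inductive step.} Suppose the invariant holds before the algorithm adds an edge $e$. A small component not touched by $e$ keeps its vertex set and its representative, so the invariant persists for it. The components touched by $e$ are handled by the three cases of the update scheme.
\begin{itemize}
\item If $e$ is good, the merged component is big. It needs no representative, and both old representatives are eliminated.
\item If $e$ is bad and joins a small component with a big one, the result is big. This uses the footnoted consequence of downward monotonicity: any set containing a big component is big. Again no representative is required.
\item If $e$ is bad and joins two small components $C_1,C_2$ into a small component $C_1\cup C_2$, the invariant gives representatives $v_1=\min_\prec V(C_1)$ and $v_2=\min_\prec V(C_2)$. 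The scheme keeps the $\prec$-smaller of the two, say $v_2$ when $v_2\prec v_1$. Then $\min_\prec V(C_1\cup C_2)=\min_\prec\{v_1,v_2\}=v_2$, because the minimum of a union is the minimum of the two minima under a total order.
\end{itemize}
In every case, each small component after the step has exactly the $\prec$-minimum vertex as its unique representative.

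There is no real obstacle here. The only point needing care is checking that the case analysis of the scheme is exhaustive. This holds because $e$ is available, so at least one endpoint lies in a small component. The two endpoints lie in distinct components, so the cases are two small components or one small and one big. The small-plus-small merge is then either small (bad) or big (good). The remaining fact, that the minimum of a union is the smaller of the two minima, is immediate.
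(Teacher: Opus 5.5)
Your proof is correct and follows the paper's approach: the paper's proof is just "a straightforward induction on the size of the partial solution." You spell out the base case and the three update cases, including the fact that the $\prec$-minimum of $V(C_1)\cup V(C_2)$ is the smaller of the two minima, which the paper leaves implicit.
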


\begin{proof}
    The proof proceeds by a straightforward induction on the size of the partial solution.
\end{proof}

\begin{lemma} \label{available-edge}
    Suppose that \(v\) is eliminated and assigned a price during an iteration, and let \(F\) denote the partial solution immediately before that iteration. Then, for every edge \(e'\in E\) available at that iteration, it holds that \[p(v) \leq \frac{c(e')}{\Delta_F(e')}.\]
\end{lemma}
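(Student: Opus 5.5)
The plan is to read the bound directly off the selection rule of the good-bad algorithm. Fix the iteration in which $v$ is eliminated, let $F$ be the partial solution just before it, and let $e^*$ be the edge the algorithm adds. The rule in \cref{alg:good-bad} says that $e^*$ minimizes $c(e)/\Delta_F(e)$ over all edges with $\Delta_F(e)>0$. Since these are exactly the available edges, we get
\[
\frac{c(e^*)}{\Delta_F(e^*)} \le \frac{c(e')}{\Delta_F(e')}
\]
for every available edge $e'$. It therefore suffices to show that $p(v) = c(e^*)/\Delta_F(e^*)$. No earlier lemma is needed for this beyond the definitions, although \cref{representative} is useful for knowing which vertex gets eliminated.

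To prove $p(v) = c(e^*)/\Delta_F(e^*)$, we go through the three cases of the pricing scheme.

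\begin{itemize}
\item If $e^*$ is good, then $\Delta_F(e^*)=2$. Each of the two eliminated representatives gets price $c(e^*)/2$, which equals $c(e^*)/\Delta_F(e^*)$.
\item If $e^*$ is bad and joins a small component to a big one, then $\Delta_F(e^*)=1$. The single eliminated representative gets price $c(e^*)$.
\item If $e^*$ is bad and merges two small components into a small component, then again $\Delta_F(e^*)=1$. The later representative in $\prec$ gets price $c(e^*)$.
\end{itemize}

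In every case the eliminated vertex $v$ satisfies $p(v)=c(e^*)/\Delta_F(e^*)$. This uses only that a vertex is eliminated exactly once, namely in the iteration that assigns its price.

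One subtlety needs checking. Vertices with $h(\{v\})=0$ are eliminated before any iteration, with price $0$, so they do not fall under the statement's hypothesis. Even if one wanted to include them, the bound $0\le c(e')/\Delta_F(e')$ holds trivially by nonnegativity of costs.

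The only real obstacle is bookkeeping: making sure that ``available'' in the statement matches the candidate set over which the algorithm takes its $\arg\min$, namely edges with $\Delta_F(e)>0$, both evaluated with respect to the same $F$. Once this is noted, the lemma follows immediately from the greedy choice.
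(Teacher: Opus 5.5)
Your proposal is correct and matches the paper's proof. The greedy selection rule gives $c(e^*)/\Delta_F(e^*)\le c(e')/\Delta_F(e')$ for every available $e'$, and the charging scheme gives $p(v)=c(e^*)/\Delta_F(e^*)$ in each case; your explicit three-case check and the remark about vertices with $h(\{v\})=0$ only spell out details the paper leaves implicit.
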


\begin{proof}
    In the iteration when $v$ is eliminated, \cref{alg:good-bad} selects an edge $e$ minimizing the cost-to-benefit ratio ${c(e)}/{\Delta_F(e)}$, meaning ${c(e)}/{\Delta_F(e)} \leq {c(e')}/{\Delta_F(e')}$ for every edge $e'$ available at that iteration. From the charging scheme we have $p(v) = c(e)/2$ if $e$ is good,  and $p(v) = c(e)$, if $e$ is bad, meaning $p(v) = {c(e)}/{\Delta_F(e)}$, and completing the proof.
\end{proof}

\begin{lemma} \label{parent-edge}
    For every non-root vertex \(v\in V \setminus \cup_{i\in k} R^*_i\), the edge \(e_v=\{v,\pare(v)\}\) remains available until \(v\) is eliminated in the charging scheme.
\end{lemma}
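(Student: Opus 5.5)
We need to show that as long as $v$ has not been eliminated, the edge $e_v=\{v,\pare(v)\}$ is available, i.e.\ its endpoints lie in different components of $G_F$, at least one of which is small. The plan is to use \cref{representative} together with the fact that $\pare(v)\prec v$. Indeed, $\pare(v)$ lies in the same optimal component $T_i^*$ as $v$, and has depth $\dep(v)-1<\dep(v)$ in the birooted tree. So by the definition of the order, $\pare(v)\prec v$ regardless of tie-breaking.

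First, I will observe when $v$ can be non-eliminated. A vertex $v$ is either eliminated at the start (if $h(\{v\})=0$) or is a representative of the small component containing it. Once eliminated, a vertex never becomes a representative again. If $h(\{v\})=0$, then $v$ is eliminated before any iteration, so the statement is vacuous. Otherwise $v$ stays a representative until the moment it is eliminated. So fix any iteration before $v$ is eliminated, with current partial solution $F$; then $v$ is the representative of its component $C$, and $C$ is small.

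Second, I will show that $\pare(v)\notin V(C)$. By \cref{representative}, the representative $v$ is the $\prec$-minimum vertex of $C$. If $\pare(v)\in V(C)$, then $\pare(v)\prec v$ would contradict this minimality. Hence $v$ and $\pare(v)$ lie in distinct components of $G_F$, and the component $C$ of $v$ is small. Therefore $\Delta_F(e_v)\ge 1$, i.e.\ $e_v$ is available.

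This holds at every iteration up to and including the one in which $v$ is eliminated, since the partial solution $F$ immediately before that iteration still has $v$ as a representative. I expect no real obstacle here. The only point needing care is the bookkeeping claim that a non-eliminated vertex with $h(\{v\})=1$ is always the representative of its (small) component. This follows directly from the charging scheme: representatives are removed only by elimination, and in the small–small bad case the surviving representative is explicitly kept.
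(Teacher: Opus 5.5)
Your proof is correct and uses the same two ingredients as the paper's: the minimality invariant of Lemma~\ref{representative} and the fact that $\pare(v)\prec v$. The paper argues by contradiction at a hypothetical iteration where the components of $v$ and $\pare(v)$ merge, applying the invariant to $\pare(v)$'s component and then the charging rule to show $v$ would have been eliminated there; you instead apply the invariant directly to $v$'s own component at every iteration before its elimination, which is a slightly more direct version of the same argument.
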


\begin{proof}
    Note that, an edge remains available whenever its endpoints lie in distinct connected components and at least one of these components is small. Suppose that $v$ is eliminated at iteration $t$. Thus, $v$ is the unique representative of the component containing \(v\) throughout all iterations preceding \(t\), and in particular, this component must always be small. Thus, it suffices to show that the two components containing $v$ and $\pare(v)$ do not merge prior to $t$. 
    
    Assume for the sake of contradiction that the components containing $\pare(v)$ and $v$ merge at some iteration $t' < t$. Let \(C_v\) and \(C_{\pare}\) denote the components containing \(v\) and \(\pare(v)\), respectively, immediately before this merge, and let \(C\) denote the resulting component.
    
    As $v$ is eliminated later at iteration $t$, and it remains to be the representative of its component until iteration $t$, components $C_v$ and $C$ must be small. Since $C$ is small,  $C_{\pare}$ is also small and therefore has a representative $u$. 
    By ~\cref{representative} it holds that $u \prec \pare(v)$. 
    On the other hand, because $\pare(v)$ and $v$ belong to the same $T^*_i$ and $\dep(\pare(v)) < \dep(v)$,  it holds that $\pare(v) \prec v$. 
    By transitivity, we obtain $u  \prec v$. 
    So, it follows from the charging scheme that \(v\) would have been eliminated at iteration \(t'\), contradicting the assumption that \(v\) is eliminated at iteration \(t>t'\).
\end{proof}

\begin{corollary}\label{tree-price}
    For every vertex \(v\in V \setminus \cup_{i\in k} R^*_i\), it holds that $p(v) \leq c(e_v)$ where $e_v=\{v, \pare(v)\}$.
\end{corollary}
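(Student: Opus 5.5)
The corollary follows by combining \cref{parent-edge} with \cref{available-edge}. Fix a non-root vertex \(v\in V\setminus\cup_i R^*_i\), and let \(e_v=\{v,\pare(v)\}\).

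We first dispose of the trivial case. If \(h(\{v\})=0\), then \(v\) is eliminated at initialization with \(p(v)=0\). Since edge costs are nonnegative, \(p(v)=0\le c(e_v)\), and we are done.

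Otherwise, \(v\) starts as the representative of the small singleton \(\{v\}\). It is eliminated at some iteration \(t\) of the algorithm, since at termination every vertex has been eliminated. Let \(F\) be the partial solution immediately before iteration \(t\).

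By \cref{parent-edge}, the edge \(e_v\) is still available at iteration \(t\). Two consequences follow:
\begin{itemize}
\item its endpoints lie in different components, at least one of them small, so \(\Delta_F(e_v)\ge 1\);
\item \cref{available-edge} applies to \(e' = e_v\) and gives \(p(v)\le c(e_v)/\Delta_F(e_v)\).
\end{itemize}
Since \(\Delta_F(e_v)\in\{1,2\}\) and \(c(e_v)\ge 0\), we get \(c(e_v)/\Delta_F(e_v)\le c(e_v)\). Hence \(p(v)\le c(e_v)\).

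The one point needing care is how ``remains available until \(v\) is eliminated'' in \cref{parent-edge} should be read. It must mean that \(e_v\) is available at the start of the iteration in which \(v\) is eliminated, so that the inequality in \cref{available-edge} is evaluated with respect to the same \(F\). This is exactly what the proof of \cref{parent-edge} shows: the components of \(v\) and \(\pare(v)\) do not merge before iteration \(t\), and \(v\)'s component is small throughout.
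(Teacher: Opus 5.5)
Your proof is correct and follows the paper's argument: \cref{parent-edge} shows that $e_v$ is available at the start of the iteration in which $v$ is eliminated, and \cref{available-edge} then gives $p(v)\le c(e_v)/\Delta_F(e_v)\le c(e_v)$. Your separate treatment of the case $h(\{v\})=0$ is extra care that the paper's one-line proof leaves implicit. In that case $p(v)=0$ is assigned at initialization rather than during an iteration, so \cref{available-edge} does not literally apply.
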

\begin{proof}
    By \cref{parent-edge}, $e_v$ is available right before elimination of $v$, and therefore, by \cref{available-edge} we get $p(v) \leq {c(e_v)}/{\Delta_F(e_v)} \leq c(e_v)$.
\end{proof}

\begin{lemma} \label{root-price}
    For every  optimal tree $T^*_i$ and $v \in V(T^*_i)$, it holds that $p(v) \leq c(e^*_i)$ where $e^*_i$ is the maximum-cost edge in $T^*_i$.\footnote{For non-root vertices $v\in V(T_i^*)\setminus R^*_i$, this can be directly concluded from \cref{tree-price}, but the included proof works for both roots and non-root vertices.}
\end{lemma}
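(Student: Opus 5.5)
Fix an optimal tree $T^*_i$ and a vertex $v \in V(T^*_i)$. If $h(\{v\})=0$ then $p(v)=0$ and there is nothing to prove, so assume $v$ starts as a representative. Suppose $v$ is eliminated at iteration $t$, with partial solution $F$ immediately before. By \cref{available-edge}, it suffices to exhibit some edge $e'$ that is available at iteration $t$ and satisfies $c(e') \le c(e^*_i)$; then $p(v) \le c(e')/\Delta_F(e') \le c(e')\le c(e^*_i)$. The natural candidates are the edges of $T^*_i$ itself, all of which cost at most $c(e^*_i)$ by the choice of $e^*_i$.

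Let $C$ be the component of $G_F$ containing $v$. Since $v$ is the representative of $C$ right before iteration $t$, $C$ is small. I first claim that $V(T^*_i)\not\subseteq V(C)$. Indeed, $T^*_i$ is a component of a feasible (optimal) solution. Since $\delta(V(T^*_i))$ contains no edge of the optimum, feasibility forces $h(V(T^*_i))=0$ whenever $V(T^*_i)\neq V$; the case $V(T^*_i)=V$ gives $h(V)=0$ as well, since otherwise no solution can make all components big. Hence, if $V(T^*_i)\subseteq V(C)$, downward monotonicity would force $h(C)=0$, contradicting that $C$ is small. So $T^*_i$ has vertices both inside and outside $V(C)$.

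Since $T^*_i$ is connected, some edge $e'\in E(T^*_i)$ has exactly one endpoint in $V(C)$. Its endpoints then lie in distinct components of $G_F$, one of which, namely $C$, is small, so $e'$ is available at iteration $t$. Moreover $c(e')\le c(e^*_i)$. Applying \cref{available-edge} finishes the argument.

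The only delicate step is justifying $h(V(T^*_i))=0$ from feasibility of the optimal solution, including the degenerate case where $T^*_i$ spans all of $V$. Everything else is a direct application of \cref{available-edge}, and the argument does not need to distinguish between roots and non-roots.
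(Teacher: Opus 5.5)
Your proof is correct and is essentially the paper's argument: $V(C)$ cannot contain the big set $V(T^*_i)$, so connectivity of $T^*_i$ gives an edge of $T^*_i$ crossing $\delta(V(C))$, which is available and costs at most $c(e^*_i)$, and \cref{available-edge} finishes. The only differences are points the paper takes for granted: you treat vertices with $h(\{v\})=0$ separately (they are priced at initialization, not during an iteration), and you justify why $V(T^*_i)$ is big, including the degenerate case $V(T^*_i)=V$.
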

\begin{proof}
    Consider the iteration in which $v$ is eliminated, and let $C_v$ be the connected component containing $v$ before the elimination. Note that $V(C_v)$ cannot contain all of $V(T^*_i)$. Indeed, since \(T_i^*\) is big, downward monotonicity would imply that \(C_v\) is also big, meaning all vertices of $C_v$, including $v$, would already be eliminated. 
    
    Since $v\in V(T^*_i)\cap V(C_v)$,  $V(T^*_i) \setminus V(C_v) \ne \emptyset$, and $T^*_i$ is connected, there exists an edge $e\in \delta(V(C_v))\cap E(T^*_i)$. As $e\in \delta(V(C_v))$, the endpoints of $e$ belong to different connected components at the current iteration, one of which is $C_v$. Because $C_v$ is small, $e$ must be available. By \cref{available-edge} and since $e^*_i$ is a maximum-cost edge, we get $p(v) \leq c(e) \leq c(e^*_i)$.

\end{proof}

\begin{lemma} \label{2-root-price}
    For every optimal component $T^*_i$ with roots $R^*_i = \{u_i,v_i\}$ it holds that \[p(u_i) + p(v_i) \le c(e^*_i) + \frac{c(T^*_i)}{2},\]
where \(c(T^*_i) \coloneqq \sum_{e\in E(T^*_i)} c(e)\).
\end{lemma}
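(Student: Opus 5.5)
The idea is to use \cref{root-price} for one root and to find a sharper bound for the other. Let \(u\) be the root in \(R^*_i\) that is eliminated no later than the other (ties broken arbitrarily), and call the other root \(v\). By \cref{root-price}, \(p(v)\le c(e^*_i)\). So it suffices to show
\[
p(u)\le \tfrac{c(T^*_i)}{2}.
\]
If \(h(\{u\})=0\) or \(h(\{v\})=0\), one of the two prices is \(0\) and the other is at most \(c(e^*_i)\) by \cref{root-price}, so we are done. Hence we may assume both roots start as representatives of singleton small components.

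Let \(t\) be the iteration in which \(u\) is eliminated, and let \(F\) be the partial solution just before it. At that moment neither \(u\) nor \(v\) has been eliminated. Every non-eliminated vertex is the representative of its component, and each small component has a unique representative. So \(u\) and \(v\) lie in distinct small components \(C_u\) and \(C_v\) of \(G_F\). In particular \(e^*_i=\{u,v\}\) joins two different small components and is available at iteration \(t\).

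We now split into two cases according to whether \(e^*_i\) is good.
\begin{itemize}
\item[(i)] Suppose \(C_u\cup C_v\) is big, so \(e^*_i\) is good. Then \cref{available-edge} gives \(p(u)\le c(e^*_i)/2\le c(T^*_i)/2\).
\item[(ii)] Suppose \(C_u\cup C_v\) is small. Since \(T^*_i\) is big, downward monotonicity implies \(V(T^*_i)\not\subseteq V(C_u)\cup V(C_v)\). Because \(T^*_i\) is connected and contains \(u\) and \(v\), some edge \(f\in E(T^*_i)\) has exactly one endpoint in \(V(C_u)\cup V(C_v)\). This \(f\) differs from \(e^*_i\), since both endpoints of \(e^*_i\) lie inside that union. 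The endpoint of \(f\) inside the union lies in a small component (\(C_u\) or \(C_v\)), and the other endpoint lies in a different component, so \(f\) is available at iteration \(t\). Applying \cref{available-edge} to both \(e^*_i\) and \(f\) gives
\[
p(u)\le \min\{c(e^*_i),c(f)\}\le \tfrac{c(e^*_i)+c(f)}{2}\le \tfrac{c(T^*_i)}{2},
\]
where the last step uses that \(e^*_i\) and \(f\) are distinct edges of \(T^*_i\) and that costs are nonnegative.
\end{itemize}

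In either case, adding \(p(v)\le c(e^*_i)\) yields the lemma. The step needing the most care is justifying that at iteration \(t\) both roots are still representatives of distinct small components. This includes the situation where \(u\) and \(v\) are eliminated in the same iteration by a good edge, where the argument still applies at that iteration. The key point is that representatives are exactly the non-eliminated vertices, one per small component. Once this is in place, finding the second available edge \(f\) is the crux, and it follows from \(T^*_i\) being big.
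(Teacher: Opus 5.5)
Your proof is correct and follows essentially the paper's route. The root eliminated later is bounded by $c(e^*_i)$ via \cref{root-price}, and the root eliminated first is bounded by $c(T^*_i)/2$, because either $e^*_i$ is good at that iteration or some other edge of $T^*_i$ leaving $C_u\cup C_v$ is available. The only differences are cosmetic: you split cases on whether $C_u\cup C_v$ is big rather than on whether an edge of the subtrees $T^*_u,T^*_v$ crosses $C_u$ or $C_v$, and you explicitly dispose of roots with $h(\{\cdot\})=0$, which the paper leaves implicit.
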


\begin{proof}
    Fix an optimal component \(T^*\in\{T_1^*,\ldots,T_k^*\}\), and let
    \(R^*=\{u, v\}\) denote its root set. Recall that
    \(e^*=\{u, v\}\) is a maximum-cost edge of \(T^*\). Let $T^*_u$ and $T^*_v$ denote the sub-trees of $(V(T^*), E(T^*)\setminus e^*)$ containing $u$ and $v$, respectively.
    
    As \(e^*=\{u, v\}\) is a maximum-cost edge, for every $e\in E(T^*)\setminus \{e^*\}$ we have $c(e) \leq c(T^*)/2$.
    Let \(t\) be the first iteration at which either \(u\) or \(v\) is eliminated, and assume without loss of generality that \(v\) is eliminated at iteration \(t\). Since every connected component has a unique non-eliminated vertex, its representative, and neither \(u\) nor \(v\) has been eliminated before iteration \(t\), the two vertices must belong to distinct small components immediately prior to iteration \(t\). Denote these components by \(C_u\) and \(C_v\), respectively, and let $F$ be the partial solution of the algorithm right before iteration $t$. We consider the following two cases:

    \begin{itemize}
        \item \textbf{Case 1.} There exists some edge $e \in \delta(V(C_u))\cap E(T^*_u)$ or $e \in \delta(V(C_v))\cap E(T^*_v)$ at the beginning of iteration $t$. Endpoints of $e$ belong to different connected components at the current iteration, one of which is $C_v$ or $C_u$, and as $C_v$ and $C_u$ are small, $e$ must be available. By \cref{available-edge} we know $p(v) \leq c(e)$, and by \cref{root-price} we have $p(u) \leq c(e^*)$. Since $e \in E(T^*) \setminus e^*$, we have $c(e) \le {c(T^*)}/{2}$, implying
        \[
            p(u) + p(v) \leq  c(e^*)+c(e) \leq c(e^*)+\frac{c(T^*)}{2}. 
        \]
        \item \textbf{Case 2.} There does not exist any edge $e \in \delta(V(C_u))\cap E(T^*_u)$ or $e \in \delta(V(C_v))\cap E(T^*_v)$. In this case, we must have $C_u \cup C_v \supseteq T^*_u \cup T^*_v$, meaning $C_u \cup C_v$ is big by downward monotonicity. Thus, $e^*$ connects two small connected components $C_v$ and $C_u$ to form a big component, and therefore, it is a good edge. By \cref{available-edge}, we know $p(v) \leq {c(e^*)}/{2}$, and by \cref{root-price} we have $p(u) \leq c(e^*)$ implying
        \[
            p(u) + p(v) \leq  c(e^*) +\frac{c(e^*)}{2} \leq c(e^*) + \frac{c(T^*)}{2}.
        \]\qedhere
    \end{itemize}
\end{proof}

\begin{theorem} \label{thm:main_approx}
    The good-bad algorithm achieves an approximation ratio of \({3}/{2}\).
\end{theorem}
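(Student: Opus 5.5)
The plan is to sum the vertex prices component by component over the optimal trees $T^*_1,\dots,T^*_k$. By the charging scheme, $\sum_{e\in\textsc{Alg}} c(e)=\sum_{v\in V} p(v)$. The sets $V(T^*_i)$ partition $V$ (the optimal solution spans all vertices, possibly with singleton components). So it suffices to show, for each $i$, that $\sum_{v\in V(T^*_i)} p(v)\le \tfrac32 c(T^*_i)$. Summing over $i$ then gives $c(\textsc{Alg})\le \tfrac32\,\mathrm{OPT}$.

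Fix a component $T^*_i$ with at least one edge, roots $R^*_i=\{u_i,v_i\}$, and maximum-cost edge $e^*_i=\{u_i,v_i\}$. I split the vertices into the two roots and the non-root vertices.
\begin{itemize}
\item For each non-root $v$, \cref{tree-price} gives $p(v)\le c(e_v)$ with $e_v=\{v,\pare(v)\}$. In the birooted tree, the map $v\mapsto e_v$ is a bijection from $V(T^*_i)\setminus R^*_i$ onto $E(T^*_i)\setminus\{e^*_i\}$, since each non-root edge is the parent edge of its deeper endpoint. Hence $\sum_{v\notin R^*_i} p(v)\le c(T^*_i)-c(e^*_i)$.
\item For the two roots, \cref{2-root-price} gives $p(u_i)+p(v_i)\le c(e^*_i)+c(T^*_i)/2$.
\end{itemize}
Adding these bounds, the $c(e^*_i)$ terms cancel, leaving $\sum_{v\in V(T^*_i)}p(v)\le \tfrac32 c(T^*_i)$.

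The remaining case is an edgeless component, a single vertex $\{v\}$. Feasibility of the optimal solution forces $h(\{v\})=0$, so $v$ was eliminated at initialization with $p(v)=0$, and the bound holds trivially.

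The one step needing care is the bijection claim. Depth is measured to the nearest root, and the tree splits at $e^*_i$ into $T^*_u$ and $T^*_v$. Each non-root vertex therefore has a unique parent on its path toward its own root, and every edge other than $e^*_i$ joins a vertex to its parent, so the map is well defined and bijective. The rest is bookkeeping.
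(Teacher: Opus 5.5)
Your proposal is correct and is essentially the paper's proof: bound each non-root price by its parent edge via Corollary~\ref{tree-price}, bound the two roots together via Lemma~\ref{2-root-price}, let the $c(e^*_i)$ terms cancel, and sum over the optimal components. Your explicit check that $v\mapsto e_v$ is a bijection onto $E(T^*_i)\setminus\{e^*_i\}$, and your treatment of singleton optimal components (where feasibility forces $h(\{v\})=0$ and hence $p(v)=0$), fill in details the paper leaves implicit.
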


\begin{proof}
    Summing the inequality $p(v) \le c(e_v)$, established in \cref{tree-price} over all non-root vertices of a component $T^*_i$ gives:
    \begin{equation} \label{eq:nonroot_bound}
        \sum_{v \in V(T^*_i) \setminus \{u_i, v_i\}} p(v) \le \sum_{v \in V(T^*_i) \setminus \{u_i, v_i\}} c(e_v) = c(T^*_i) - c(e^*).
    \end{equation}

    For the roots of the component, by Lemma~\ref{2-root-price}, the assigned prices satisfy \(p(u_i) + p(v_i) \le c(e^*_i) + \frac{c(T^*_i)}{2}\).
    Combining with \cref{eq:nonroot_bound} the total price assigned to all vertices in $T^*_i$ is bounded by
    \begin{align*}
        \sum_{v \in V(T^*_i)} p(v) = (p(u_i) + p(v_i)) + \sum_{v \in V(T^*_i) \setminus \{u_i, v_i\}} p(v) 
        \le \left( c(e^*_i) + \frac{c(T^*_i)}{2} \right) + \left( c(T^*_i) - c(e^*_i) \right) 
        = \frac{3}{2}\cdot c(T^*_i).
    \end{align*}

    Summing this across all  components $T_1^*, \dots, T_k^*$, we get $\sum_{v \in V} p(v) \leq \frac{3}{2}\cdot \text{OPT}$. Finally, recall that from the charging scheme we had $\sum_{v \in V} p(v) = \sum_{e\in \textsc{Alg}} c(e)$, completing the proof.
\end{proof}

\begin{proposition} \label{prop:M+1}
    The good-bad algorithm achieves an approximation ratio of $M+1$ where $M \coloneqq \max_{i\in [k]}{c(e^*_i)}/{c(T^*_i)}$.
\end{proposition}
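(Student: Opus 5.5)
We reuse the per-component accounting from the proof of \cref{thm:main_approx} and only change how the roots are handled. Fix a component $T^*_i$ with roots $R^*_i=\{u_i,v_i\}$. For the non-root vertices, \cref{eq:nonroot_bound} gives
\[
\sum_{v \in V(T^*_i)\setminus R^*_i} p(v) \le c(T^*_i) - c(e^*_i).
\]
For the two roots, we use \cref{root-price} separately on each of them instead of \cref{2-root-price}. This yields $p(u_i)+p(v_i)\le 2c(e^*_i)$. Adding the two bounds, the total price on $V(T^*_i)$ is at most
\[
c(T^*_i)+c(e^*_i) \le (1+M)\,c(T^*_i),
\]
where the last inequality is exactly the definition of $M$, namely $c(e^*_i)\le M\,c(T^*_i)$.

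Summing over $i$ and using $\sum_{v\in V} p(v)=\sum_{e\in\textsc{Alg}} c(e)$ from the charging scheme gives $c(\textsc{Alg})\le (M+1)\,\mathrm{OPT}$.

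Two boundary cases need a remark. First, a component may be a single vertex with no edges, so $e^*_i$ is undefined. Such a vertex $v$ is big, so $p(v)=0$ and it contributes nothing. Second, if $c(T^*_i)=0$ the ratio defining $M$ is undefined, but then \cref{root-price} and \cref{tree-price} force all prices in that component to be $0$, so the bound still holds.

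The main obstacle is essentially none: the argument is a one-line substitution of the crude root bound for \cref{2-root-price}. The only point to note is that this bound is incomparable with the $3/2$ guarantee and is better exactly when $M<1/2$, that is, when no optimal component is dominated by its heaviest edge. In combination with \cref{thm:main_approx}, the ratio is therefore $\min\{3/2,\,M+1\}$.
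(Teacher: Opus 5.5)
Your proof is correct and takes essentially the same approach as the paper: both bound the two roots by $p(u_i)+p(v_i)\le 2c(e^*_i)\le c(e^*_i)+M\,c(T^*_i)$ and combine this with \cref{eq:nonroot_bound}. The paper reaches $2c(e^*_i)$ by going back through the two cases of Lemma~\ref{2-root-price}, whereas you apply Lemma~\ref{root-price} to each root directly, which is slightly cleaner; your remarks on edgeless and zero-cost components also cover degenerate cases the paper leaves implicit.
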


\begin{proof}
    In the proof of \cref{2-root-price}, fixing $T^*\in \{T^*_1, \dots, T^*_k\}$ with the maximum-cost edge $e^*=\{v, u\}$, we bound $p(u) + p(v)$ in two cases:
    \begin{itemize}
        \item \textbf{Case 1.} $p(u) + p(v) \leq c(e^*) + c(e).$
        \item \textbf{Case 2.} $p(u) + p(v) \leq  c(e^*)+c(e^*)/2.$
    \end{itemize}
    Both of these bounds are at most $2c(e^*)=c(e^*) + \left({c(e^*)}/{c(T^*)}\right) \cdot c(T^*) \le c(e^*) +M\cdot c(T^*)$. Combining this new bound with \cref{eq:nonroot_bound} gives us
    \begin{align*}
        \sum_{v \in V(T^*_i)} p(v) &= (p(u_i) + p(v_i)) + \sum_{v \in V(T^*_i) \setminus \{u_i, v_i\}} p(v) \\
        &\le \left( c(e^*_i) + M \cdot c(T^*_i) \right) + \left( c(T^*_i) - c(e^*_i) \right) 
        = \left(M + 1\right)\cdot c(T^*_i).
    \end{align*}
    Summing the above inequality across all components $T_1^*, \dots, T_k^*$ results in an approximation ratio of $M+1$. 
\end{proof}

\begin{corollary}\label{cor:better-bound}
    The good-bad algorithm achieves an approximation ratio of $\min(M+1, 3/2)$ where $M \coloneqq \max_{i\in [k]}{c(e^*_i)}/{c(T^*_i)}$.
\end{corollary}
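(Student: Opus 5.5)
The corollary follows by combining \cref{thm:main_approx} with \cref{prop:M+1}. Both statements bound the same quantity: the cost $\sum_{e\in\textsc{Alg}} c(e)$ of the algorithm's output, compared against $\mathrm{OPT}$ for the same instance and the same fixed optimal solution $T^*_1,\dots,T^*_k$. So I would simply take the better of the two bounds.

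Concretely, \cref{thm:main_approx} gives $\sum_{e\in\textsc{Alg}} c(e) \le \tfrac32\,\mathrm{OPT}$. \cref{prop:M+1} gives $\sum_{e\in\textsc{Alg}} c(e) \le (M+1)\,\mathrm{OPT}$, where $M$ is defined with respect to that same optimal solution. Since both inequalities hold at once, we get
\[
\sum_{e\in\textsc{Alg}} c(e) \le \min\left(M+1,\tfrac32\right)\mathrm{OPT}.
\]

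One might instead redo the argument component by component. On each $T^*_i$ the root bound of \cref{2-root-price} could be replaced by $p(u_i)+p(v_i) \le c(e^*_i) + \min\bigl(c(e^*_i),\, c(T^*_i)/2\bigr)$, which would even give the finer guarantee $\sum_i \min(1+M_i, 3/2)\,c(T^*_i)$. That refinement is not needed for the statement as posed, so I would stay with the global minimum of the two bounds.

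There is essentially no obstacle here. The one point to check is that $M$ is defined relative to the same optimal solution used in \cref{thm:main_approx}. It is, because the theorem's argument works for any optimal solution and its component decomposition. The edge case $c(T^*_i)=0$ would make $M$ undefined; in that case all prices in $T^*_i$ are $0$ by \cref{tree-price} and \cref{root-price}, so such components can be dropped from the maximum.
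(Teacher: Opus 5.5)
Your proposal is correct and is exactly the paper's argument: the paper also obtains the corollary by taking the minimum of the bounds in Theorem~\ref{thm:main_approx} and Proposition~\ref{prop:M+1}, which hold simultaneously for the same fixed optimal solution. Your side remarks (the per-component refinement $\sum_i \min(1+M_i,3/2)\,c(T^*_i)$, and dropping components with $c(T^*_i)=0$ since all their prices vanish) are also correct, though not needed for the statement.
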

\begin{proof}
    It is directly implied from \cref{thm:main_approx} and \cref{prop:M+1}.
\end{proof}

We note that by using Corollary \ref{cor:better-bound}, we can derive that for unweighted \textsc{Lower Capacitated Tree Partitioning}, the good-bad algorithm has an approximation ratio of ${k}/{k-1}$, since $M\leq {1}/{k-1}$ for each optimal component $T^*_i$. Indeed, each edge has weight 1, and each component must have at least $k-1$ edges.  However, Bagzan, Cou\"etoux, and Tuza \cite{BazganCT11} have already observed that any spanning tree is a $\frac{k}{k-1}$-approximation to the problem: any feasible solution will have at least $n/k$ connected components, so the optimal solution must have at least $n-n/k = \left((k-1)/k\right)n$ edges, so that for a spanning tree the approximation ratio is at most $k/(k-1)$.

\section{Conclusion}

In this paper, we gave a simple alternative proof that the good-bad algorithm of \textcite{Couetoux11} achieves an approximation ratio of \(3/2\) for downward monotone functions. Many network-design problems admit relatively simple \(2\)-approximation algorithms, but breaking the factor-\(2\) barrier is often difficult, as in Steiner forest \cite{AhmadiGHJM25} and connectivity augmentation \cite{TraubZ23}. For downward monotone functions, by contrast, both the algorithm and the analysis yielding the \(3/2\)-approximation are simple. This raises the question of whether ideas from the good-bad algorithm or our charging analysis could lead to simpler algorithms with approximation guarantees below \(2\) for Steiner forest, connectivity augmentation, or related problems.

\subsection*{Acknowledgments}

AI was used to improve the writing of parts of this paper.  The authors assume responsibility for all content.

\printbibliography
\end{document}